\newtheorem{thm}{Theorem}
\newtheorem{lemma}{Lemma}
\newtheorem{prop}{Proposition}
\newcommand{\by} {\boldsymbol{y}}
\newcommand{\bh} {\boldsymbol{h}}
\newcommand{\bw} {\boldsymbol{w}}
\newcommand{\bxi} {\boldsymbol{\xi}}
\def\bal#1\eal{\begin{align}#1\end{align}}
\newcommand{\bp} {\begin{proof}}
\newcommand{\ep} {\end{proof}}
\newcommand{{\bRF}} {\right\}}
\begin{document}

\title{\huge Structural Design of Non-Uniform Linear Arrays for Favorable Propagation in Massive MIMO}

\author{Elham Anarakifirooz, Sergey Loyka

\vspace*{-1.5\baselineskip}

\thanks{E. Anarakifirooz and S. Loyka are with the School of Electrical Engineering and Computer Science, University of Ottawa, Canada, e-mail: sergey.loyka@ieee.org}
}

\maketitle


\vspace*{-1.2\baselineskip}
\begin{abstract}
Favorable propagation (FP) for massive multiple-input multiple-output (MIMO) systems with uniform and non-uniform linear arrays is considered. A gap in the existing FP studies of uniform linear arrays is identified, which is related to the existence of grating lobes in the array pattern and which results in the FP condition being violated, even under distinct angles of arrival. A novel analysis and design of \textit{non-uniform} linear arrays are proposed to cancel grating lobes and to restore favorable propagation for \textit{all} distinct angles of arrival. This design is consistent with the popular hybrid beamforming paradigm and extends to multipath channels and arrays of directional elements.

\end{abstract}

\vspace*{-0.4\baselineskip}
\begin{IEEEkeywords}
Massive MIMO, favorable propagation, non-uniform linear array, grating lobe.
\end{IEEEkeywords}

\vspace*{-0.6\baselineskip}
\section{Introduction}

Massive MIMO is widely accepted as one of the key technologies for 5G and beyond. It provides significant improvements in spectral and energy efficiencies as well as simplified processing in multi-user environments, due to a phenomenon known as "favorable propagation", whereby the channel vectors of different users become orthogonal to each other as the number of antennas increases \cite{Marzetta-16}-\cite{Masouros-15}. While they are not exactly orthogonal to each other in practice, it has been shown theoretically and experimentally that the FP holds approximately in many scenarios  of practical interest so that the benefits of massive MIMO can be exploited \cite{Marzetta-16}-\cite{Martinez-18}.

Favorable propagation for uniform linear arrays (ULA) was studied analytically in \cite{Marzetta-16}-\cite{Masouros-15} and experimentally in \cite{Hoydis-12}-\cite{Martinez-18}. Antenna array geometry and propagation environment along with users' locations were identified as the key factors determining the existence or non-existence of the FP. It was concluded that, for a fixed antenna element spacing and line-of-sight (LOS) propagation, the FP holds asymptotically (as the number of antenna elements increases without bound) as long as users have distinct angles of arrival (AoA). In this Letter, we show that this conclusion is based on the implicit assumption (not mentioned in the above studies) that there are no grating lobes (GL) in the array pattern and that it fails to hold if GLs are present and some users align with their directions.

Larger element spacing under fixed number of elements (i.e. fixed complexity/cost) is desirable to increase the array spatial resolution (or, equivalently, to decrease its beamwidth), i.e. its ability to resolve nearby users and hence to cancel inter-user interference (IUI). However, this has a major drawback as grating lobes appear in the array pattern for larger spacing \cite{VanTrees-02}\cite{Hansen-98} so that different users appearing at the main beam and GLs directions cannot be resolved, which creates significant IUI at those directions. This has a profound negative impact on favorable propagation, even under distinct AoAs. The larger the antenna spacing, the more grating lobes appear so that more user directions should be banned to maintain the FP.

An experimental observation of the grating lobe's impact on IUI and favorable propagation was  reported in \cite{Martinez-18}, but, to the best of our knowledge, no comprehensive \textit{analysis} is available in the literature, so that the existing FP analysis for ULAs is incomplete in this respect. No design to eliminate the impact of GLs on favorable propagation was proposed either. A number of designs of traditional antenna arrays (not massive MIMO) have been proposed to suppress grating lobes \cite{Krivosheev-15}. However, these designs do not target FP explicitly and it is far from clear whether the FP condition is satisfied for these designs (note that the absence of GLs does \textit{not} guarantee the FP).

To address these issues, we present a rigorous analysis of grating lobes' impact on favorable propagation and  propose a novel non-uniform linear array (NULA) design that effectively cancels grating lobes and also ensures that the FP holds for any element spacing and any distinct AoAs. The proposed NULA design is block-partitioned, whereby each block (subarray) is an ULA but the overall array is not uniform. In order to cancel GLs, we show that the number of subarrays (blocks) and their spacing have to be carefully selected so that, asymptotically, GLs are cancelled by nulls in the block array factor and thus the FP is restored for any distinct AoAs. A rigorous analysis is presented to find proper number of subarrays and their spacing; it includes some tools from number theory, which, to the best of our knowledge, have not been used before in the antenna array or massive MIMO literature.  While the actual number of grating lobes and their directions do depend on the main beam direction (i.e. beam steering) \cite{VanTrees-02}\cite{Hansen-98}, the subarray-based design proposed here is independent of it, so it can accommodate beam steering as well since grating lobes are canceled for any direction of the main beam. This design is also consistent with the popular hybrid beamforming paradigm \cite{Puglielli-16}. It allows one to increase the spatial resolution (and hence to accommodate more users) under a fixed array complexity/cost by increasing element spacing and, at the same time, to avoid an increase in IUI and to restore the FP by cancelling GLs (which appear due to a larger element spacing).

The main contributions of this Letter are in Propositions \ref{prop.alpha} and \ref{prop.alpha.asymp} (the IUI leakage factor for block-partitioned NULA) and Theorem \ref{thm.NULA} (the NULA design to eliminate grating lobes and restore favorable propagation, for any element spacing, beam steering and distinct AoAs). An extension of these results to multipath/blocked-LOS channels and to arrays of directional elements is discussed in Sec. V.

\vspace*{-.4\baselineskip}
\section{Channel Model and Favorable Propagation}
\label{sec.Channel Model}

Let us consider a frequency-flat Gaussian MIMO channel, where $M$ independent single-antenna users transmit simultaneously to a base station (BS) equiped with an $N$-element antenna array:
\bal
\label{eq.ch.model}
\by = \bh_1 x_1 + \sum_{i=2}^{M} \bh_i x_i + \bxi
\eal
where $\bh_i, x_i$ are the channel vector and the transmitted signal of user $i$, $i=1,...,M$; $\by, \bxi$ are the received BS signal and noise vectors, respectively; $|\bh|, \bh'$ and $\bh^+$ denote Euclidean norm (length), transposition and Hermitian conjugation, respectively, of vector $\bh$. The white noise is Gaussian circularly symmetric, of zero mean and variance $\sigma_0^2$ per Rx antenna. Frequency-selective channels can be considered via an OFDM-type approach, to which our results can be extended as well.

To simplify the decoding process, the BS uses linear processing with matched filter beamforming\footnote{Under the FP as in \eqref{FP_cond}, the MF beamformer performance is the same as that of zero forcing (ZF), minimum mean square error (MMSE) and successive interference cancellation (SIC) beamformers since users become "orthogonal" to each other \cite{Marzetta-16}. The significant advantage of the MF beamformer is its smaller computational complexity and higher robustness (decoding different users are independent of each other, which also allows for parallel/distributed implementation).} $\bw = \bh_1/|\bh_1|$ to decode user 1, treating other users' signals as interference. Hence, its SINR can be expressed as follows:
\bal
\label{eq2}
\mbox{SINR} &=\frac{|\bh_1|^2\sigma_{x_1}^2}{|\bh_1|^{-2} \sum_{i=2}^{M}|\bh_1^+\bh_i|^2\sigma_{x_i}^2 + \sigma_0^2}\\
&= \gamma_1\left(\sum_{i=2}^{M}|\alpha_{iN}|^2\gamma_i + 1\right)^{-1},\quad \alpha_{iN}=\frac{\bh_1^+\bh_i}{N}
\eal
where $\sigma_{xi}^2$ and $\gamma_i=|\bh_i|^2 \sigma_{xi}^2/\sigma_0^2$ are the signal power and the SNR of user $i$; the channel is normalized so that $|\bh_i|^2=N$ (the propagation path loss is absorbed into the Rx SNR $\gamma_i$). Using this simplified decoding method, the SINR cannot exceed the single-user SNR $\gamma_1$,
\bal
\mbox{SINR} \le \gamma_1
\eal
and this maximum is attained when the users’ channels become orthogonal to each other,
\bal
\mbox{SINR} \to \gamma_1\  \operatorname{if}\ \alpha_{iN} \to 0\ \forall\ i>1
\eal
as the number $N$ of antenna elements increases and all SNRs stay uniformly bounded, i.e. $\gamma_i \le \gamma_{max} < \infty$ for some $\gamma_{max}$ independent of $N$ (where $\gamma_i$ may depend on $N$). This is known as (asymptotically) favorable propagation condition.
When the number of users is finite, the FP condition can be expressed in two equivalent ways:
\bal
\label{FP_cond}
\lim_{N\to\infty}\alpha_N^2=0\quad\Leftrightarrow\quad \lim_{N\to\infty}|\alpha_{iN}|=0\ \forall i >1
\eal
where $\alpha_N^2 = \sum_{i=2}^{M}|\alpha_{iN}|^2$ characterizes the total interference leakage and $|\alpha_{iN}|^2$ represents IUI power ”leakage” from user $i$ to the main user. If all users have the same SNR ($\gamma_i=\gamma_1$), then the SINR simplifies to:
\bal
\mbox{SINR} = (\alpha_N^2 + \gamma_1^{-1})^{-1} \le \gamma_1
\eal
and, under favorable propagation, the upper bound is attained, $\mbox{SINR} = \gamma_1$. While in practice the number $N$ of elements is always finite and $\alpha_N^2$ is never exactly zero, the FP is closely approached if $\alpha_N^2 \ll \gamma_1^{-1}$ so that $\mbox{SINR} \approx \gamma_1$. This justifies the asymptotic analysis $N \to \infty$ from the practical perspective, since, if the asymptotic FP \eqref{FP_cond} holds, it follows from the limit definition that there exists a sufficiently large $N$ for which $\alpha_N^2 \ll \gamma_1^{-1}$ and thus $\mbox{SINR} \approx \gamma_1$. This is no longer the case if the FP does not hold.

\section{Favorable propagation for ULAs}
\label{sec.ULA}

Favorable propagation for uniform linear arrays has been investigated analytically and experimentally \cite{Marzetta-16}-\cite{Martinez-18}. It was concluded that FP holds asymptotically for ULAs of fixed element spacing (Case 1) under LOS propagation conditions and distinct users' AoAs \cite{Marzetta-16}-\cite{Ngo-14} but does not hold if the antenna array size is fixed (Case 2) \cite{Masouros-15}, so that the element spacing decreases when $N$ increases. However, the Case 1 conclusion is based on the implicit assumption, not stated in the above studies, that there are no grating lobes in the array factor and, as we show below, it does not hold if GLs are present\footnote{as a side remark, note that there are no GLs in Case 2 considered in \cite{Masouros-15} (since element spacing there becomes sufficiently small as $N$ increases under a fixed array size) yet the FP does not hold either, i.e.,  the absence of GLs does \textit{not} guarantee the FP but their presence does break down the FP.}. In this paper, we consider only Case 1.

To see why FP does not hold if GLs are present, observe that \cite[eq. (7.17)]{Marzetta-16} holds only if its denominator is not zero  but this condition is violated if GLs are present in the antenna array pattern, even under distinct users' AoAs, as \eqref{eq5}, \eqref{eq6} and the examples below show. This observation also applies to \cite[eq. (18)]{Ngo-14} and \cite[eq. (34), (42)]{Chen-13}, so that Propositions 4 and 5 in \cite{Chen-13} hold if there are no GLs and may not hold otherwise.

\begin{figure}
\begin{center}
\includegraphics[width=1.5in]{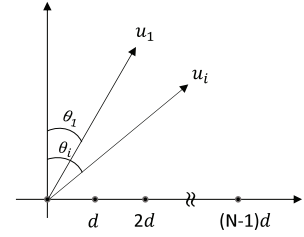}
\vspace*{-0.5\baselineskip}
\caption {An illustration of ULA$(N,d)$ geometry, where user $i$ AoA is $\theta_i$ while user 1 (main user) is at $\theta_1$, all measured from the broadside; $-\pi/2 \le \theta_1,\ \theta_i \le \pi/2$. When user 1 is decoded, user $i$ is a source of interference.}
\label{fig1}
\end{center}
\end{figure}

\begin{figure}
\vspace*{-0.5\baselineskip}
\includegraphics[width=3.5in]{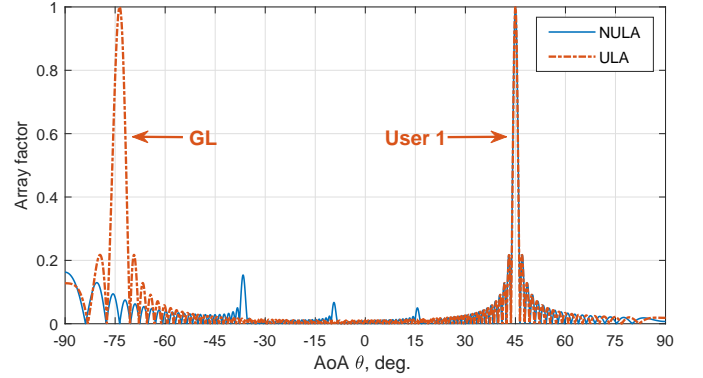}
\vspace*{-1.5\baselineskip}
\caption {Array factor of a ULA with $N=100$, $d=0.6$. While the main beam is at $\theta_1=45^o$ (where the ULA and NULA factors overlap), note the presence of a GL at $\theta_2 \approx -74^o$. The proposed NULA design with $N_b=25,\ N=4,\ p=21$ partially cancels this GL (see Theorem \ref{thm.NULA}) while preserving the main beam.}
\label{grating_lobe}
\vspace*{-1\baselineskip}
\end{figure}

To analyze the impact of grating lobes on favorable propagation, we follow \cite{Marzetta-16}-\cite{Masouros-15} and consider first LOS-dominated environment, as in e.g. mmWave/THz systems where LOS is essential to maintain a proper SNR \cite{Larson-18}, and where users have distinct AoAs (otherwise, the FP does not hold). In the far-field, the normalized channel vector of $i$-th user for $N$-element ULA of omnidirectional elements (e.g. vertical dipoles for a horizontal array as in Fig. \ref{fig1}) is:
\bal
\label{eq3}
&\bh_i =[e^{j\psi_{0i}},\cdots,e^{j\psi_{(N-1)i}}]',\quad i=1,\cdots,M\\ \notag
&\psi_{ni} =2\pi nd\sin(\theta_i),\quad n=0,\cdots,N-1
\eal
where $d$ is the element spacing (measured in wavelengths), $\theta_i$ is the AoA of user $i$ signal of which there are $M$, $-\pi/2 \le \theta_i \le \pi/2$; $\psi_{ni}$ is the phase shift at element $n$ with respect to 1st element, see Fig. \ref{fig1}. Omnidirectional elements (e.g. vertical dipoles) are assumed here for simplicity and the results are straightforward to extend to directional elements as well, see Sec. V.  All users are assumed to be in the front half-plane, $-\pi/2 \le \theta_i \le \pi/2$, and there is no backward radiation (see footnote 3). For further reference, we use $\operatorname{ULA}(N,d)$ to denote a uniform linear array with $N$ omnidirectional elements and element spacing $d$. Using the system model in \eqref{eq.ch.model}, the inter-user interference leakage from user $i$ to the main user ($i=1$) can be expressed as follows:
\bal
\label{eq5}
\alpha_{iN}&= \frac{1}{N} \sum_{n=0}^{N-1} e^{j n \Delta\psi_i} =\frac{\sin(N\Delta\psi_i/2)}{N\sin(\Delta \psi_i/2)} e^{j(N-1)\Delta\psi_i/2}
\eal
where $\Delta\psi_i = 2\pi d (\sin\theta_i - \sin\theta_1)$. Notice that $\lim_{N\to\infty} \alpha_{iN} =0$, i.e. the FP holds, provided that $\sin(\Delta \psi_i/2)\neq 0$. The latter condition may be violated even if $\theta_i \neq \theta_1$ (distinct AoAs), e.g. if $d=1,\ \theta_1=0,\ \theta_i = \pm 90^o\neq \theta_1$ so that $\Delta \psi_i= \pm 2\pi$, $\sin(\Delta \psi_i/2)= 0$ and hence $|\alpha_{iN}|=1$ for any $N$. This represents a grating lobe in the array pattern, see e.g. \cite{VanTrees-02}\cite{Hansen-98}. In general, GL directions $\phi_k$ correspond to zero denominator in \eqref{eq5}, i.e. $\sin(\Delta\psi_i/2)=0$, and, for a given $\theta_1$, can be found from $\Delta\psi_i=2\pi k$ with $\theta_i=\phi_k$, which is equivalent to
\bal
\label{eq.phi_k}
\sin(\phi_k)=\sin(\theta_1)+ k/d
\eal
where $k$ is the GL index, $k= \pm 1, \pm 2, ...$. Since $|\sin \phi_i| \le 1$, there exist no grating lobes if
\bal
\label{eq.gl.cond}
d(1+|\sin \theta_1|) < 1
\eal
In this case, the results in \cite{Marzetta-16}-\cite{Ngo-14} do hold for any distinct $\theta_i$, but they may fail to hold if \eqref{eq.gl.cond} is not satisfied. Indeed, using \eqref{eq5}, it follows that, under distinct AoAs $\theta_i \neq \theta_1$,
\bal
\label{eq6}
\lim_{N\to\infty}|\alpha_{iN}|=\left\{\begin{array}{ll}
1, \mbox{if}\ d(1+|\sin \theta_1|) \ge 1\ \&\ \theta_i=\phi_k\\
0,\ \mbox{otherwise}.
\end{array}\right.
\eal
Note a dichotomy here: the limit is either zero or one. The latter case gives the conditions when the FP fails to hold: $d(1+|\sin \theta_1|) \ge 1$ is the condition for the GL existence, and $\theta_i=\phi_k$ is the condition for $i$-th user AoA to coincide with $k$-th GL direction.

To determine the number of grating lobes, observe that $|\sin(\phi_k)|\le 1$ and use \eqref{eq.phi_k} to obtain the range of $k$:
\bal
\label{eq7}
k \ge k_{min} &= -\lfloor d(1+\sin(\theta_1))\rfloor\ge  -\lfloor 2d\rfloor\\
\label{eq8}
k \le k_{max} &= \lfloor d(1-\sin(\theta_1))\rfloor\le\lfloor 2d\rfloor
\eal
where $\lfloor \cdot \rfloor$ is the floor function, so that $I_k = \{\{k_{min},\cdots, k_{max}\} - \{0\}\}$ is the GL index set, $k=0$ represents the main beam and hence is excluded; $I_k$ is an empty set if there are no GLs. Thus, the number $K$ of GLs does not exceed $\lfloor 2d\rfloor$: $K = k_{max}-k_{min} \le \lfloor 2d\rfloor$, and there are no grating lobes if $d<1/2$, for any $\theta_1$.

\textbf{Examples:} To illustrate cases where the FP fails due to the presence of GLs, let $d=0.6$, $\theta_1=45^o$ and $\sin\theta_i=\sin\theta_1-1/d$ so that $\theta_i \approx -74^o \neq \theta_1$, i.e., distinct AoAs, but $\Delta\psi_i=-2\pi$ and the denominators in \eqref{eq5} as well as in \cite[eq. (18)]{Ngo-14}\cite[eq. (34), (42)]{Chen-13} are all zero and $|\alpha_{iN}|=1$ for any $N$,  i.e. the FP fails to hold even though the AoAs are distinct. This can be explained via the array factor shown in Fig. \ref{grating_lobe}, where the main beam is at $\theta_1=45^o$ to follow user 1 while the grating lobe appears at $-74^o$, so that, if another user is at the latter direction, it cannot be discriminated from the 1st user and hence the FP fails to hold. Note from \eqref{eq6} that the FP may fail to hold even if $d=1/2$ (as in \cite[eq. (7.17)]{Marzetta-16}), e.g. if $\theta_1=90^o$, $\theta_i=\phi_{-1}=-90^o \neq \theta_1$ so that $\Delta \psi_i = -2\pi$, $\sin(\Delta \psi_i/2) = 0$ and $|\alpha_{iN}|=1$ for any $N$ (but the FP always holds if $d<1/2$).
Even if the main user is at broadside, i.e. $\theta_1=0^o$, grating lobes appear if $d\ge 1$ and the FP fails to hold, even under distinct AoAs\footnote{since the ULA made of isotropic or omnidirectional elements is not able to discriminate users' signals coming from opposite directions, i.e. $\theta_1$ and $180^o-\theta_1$, the FP also fails to hold under distinct but opposite AoAs, for any element spacing. However, in practice this is not that important since backward radiation is usually eliminated due to an element pattern or an array design including a ground plane. Thus, it is not considered here; instead, we assume that $-\pi/2 \le \theta_1,\ \theta_i \le \pi/2$.}, e.g. if $d=1,\ \theta_1=0^o,\ \theta_i=\pm \pi/2$, which also results in zero denominators in \cite{Chen-13}\cite{Ngo-14}. In general, the larger $d$, the more grating lobes emerge \cite{VanTrees-02}\cite{Hansen-98} and favorable propagation fails to hold if users' AoAs, being distinct from each other, coincide with the GL directions. Eq. \eqref{eq6} below gives a precise condition for this to happen.

\section{Non-uniform Linear Array Design for FP}
\label{sec.NULA}

Motivated by the above analysis of grating lobes and their impact on the FP, we present below a structural design of nonuniform linear arrays that eliminates  all GLs and guarantees the FP to hold for any distinct AoAs and any element spacing $d$. To this end, let us consider a block-partitioned non-uniform array as in Fig. \ref{fig2}: the overall NULA consists of $N_b$ subarrays (blocks), which are $\operatorname{ULA}(N,d)$ and which are arranged in the $\operatorname{ULA}(N_b,D)$ block-wise pattern. The overall NULA pattern is a product of the subarray factor of $\operatorname{ULA}(N,d)$ and the block array factor of $\operatorname{ULA}(N_b,D)$ (where each block is replaced with an omnidirectional element), see e.g. \cite{VanTrees-02}\cite{Hansen-98}. Thus, the GLs in the subarray factor can be cancelled with nulls in the block array factor, as explained below. Finding proper $N_b$ and the subarray spacing $\Delta D$ are crucial  to cancel grating lobes and hence to achieve favorable propagation for any distinct AoAs and any $d$.

\begin{figure}
\begin{center}
\includegraphics[width=2.5in]{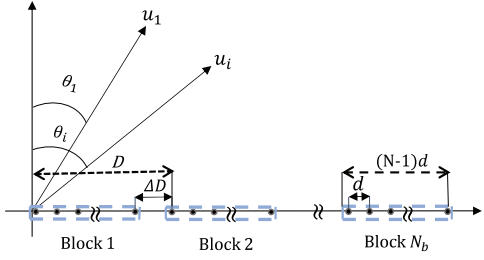}
\caption {Block-partitioned NULA of $N_b$ subarrays (blocks) $\operatorname{ULA}(N,d)$ with subarray spacing $\Delta D$; $D=(N-1)d+\Delta D$, and $(N-1)D$ is the subarray length. The AoAs are distinct, $\theta_1 \neq \theta_i$, and $-\pi/2 \le \theta_1,\ \theta_i \le \pi/2$.}
\vspace*{-1\baselineskip}
\label{fig2}
\end{center}
\end{figure}

Due to the block-wise symmetry of the structure, the overall channel vector $\bh_i$ of the NULA for user $i$ can be expressed as follow:
\bal
\label{eq10}
\bh_i = \bh_{si}\otimes \bh_{bi}
\eal
where $\otimes$ denotes Kronecker product; $\bh_{si}$ and $\bh_{bi}$ represents the channel vector of the subarray $\operatorname{ULA}(N,d)$ and of the block array $\operatorname{ULA}(N_b,D)$, respectively, where $\bh_{si}$ is as in \eqref{eq3} and $\bh_{bi}$ is
\bal
\label{eq.hbi.nula}
&\bh_{bi} =[e^{j\psi_{b,0i}},\cdots,e^{j\psi_{b,(N_b-1)i}}]',\quad i=1,\cdots,M \\\notag
&\psi_{b,ni} =2\pi n D \sin(\theta_i),\quad n=0,\cdots,N_b-1
\eal
For further use, the interference leakage terms of the NULA are defined as follows:
\bal
\label{eq.alpha.def}
\alpha_{iN} = \frac{\bh_1^+\bh_i }{N_bN},\ \alpha_{siN} = \frac{\bh_{s1}^+\bh_{si}}{N},\ \alpha_{bi} = \frac{\bh_{b1}^+\bh_{bi}}{N_b}
\eal
where $\alpha_{siN}$ and $\alpha_{bi}$ represent the respective terms for a single subarray $\operatorname{ULA}(N,d)$ and the block array $\operatorname{ULA}(N_b,D)$ while $\alpha_{iN}$ represents the overall leakage.

The following proposition is instrumental in establishing the FP for the block-partitioned NULA.
\begin{prop}
\label{prop.alpha}
Let $\bh_i$ have the Kronecker structure as in \eqref{eq10}. Then, $\alpha_{iN}$ can be expressed and bounded as follows:
\bal
\label{eq11}
&\alpha_{iN} =\alpha_{siN} \alpha_{bi}\\
\label{ee.alph.ineq}
&|\alpha_{iN}|\le \min\{|\alpha_{siN}|,| \alpha_{bi}|\}
\eal
\begin{proof}
Note the following:
\bal\notag
\label{eq.alpha_i}
\alpha_{iN} &= (N_b N)^{-1} \bh_1^+\bh_i\\ \notag
&= (N_b N)^{-1}(\bh_{s1} \otimes \bh_{b1})^+ (\bh_{si} \otimes \bh_{bi})\\ \notag
&=(N_b N)^{-1}(\bh_{s1}^+ \otimes \bh_{b1}^+) (\bh_{si} \otimes \bh_{bi})\\ 
&=(N^{-1} \bh_{s1}^+ \bh_{si})(N_b^{-1} \bh_{b1}^+ \bh_{bi})=\alpha_{siN} \alpha_{bi}
\eal
where 3rd and 4th equalities are due to the properties of Kronecker products \cite{Magnus-99}. The inequality in \eqref{ee.alph.ineq} follows from $|\alpha_{siN}|,\ |\alpha_{bi}| \le 1$.
\end{proof}
\end{prop}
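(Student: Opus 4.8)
The plan is to substitute the Kronecker structure \eqref{eq10} directly into the definition \eqref{eq.alpha.def} of $\alpha_{iN}$ and then invoke two standard identities for Kronecker products \cite{Magnus-99}: the conjugate-transpose rule $(\bA\otimes\bB)^+ = \bA^+\otimes\bB^+$ and the mixed-product rule $(\bA\otimes\bB)(\bD\otimes\bM) = (\bA\bD)\otimes(\bB\bM)$. First I would write $\bh_1^+\bh_i = (\bh_{s1}\otimes\bh_{b1})^+(\bh_{si}\otimes\bh_{bi}) = (\bh_{s1}^+\otimes\bh_{b1}^+)(\bh_{si}\otimes\bh_{bi})$ via the first rule, and then apply the mixed-product rule to collapse the right-hand side to $(\bh_{s1}^+\bh_{si})\otimes(\bh_{b1}^+\bh_{bi})$, which is just the product of the two scalars $\bh_{s1}^+\bh_{si}$ and $\bh_{b1}^+\bh_{bi}$ (a $1\times1$ Kronecker product being ordinary multiplication). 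Dividing through by $N_b N = N\cdot N_b$ and matching factors against \eqref{eq.alpha.def} then yields $\alpha_{iN} = \alpha_{siN}\,\alpha_{bi}$, which is \eqref{eq11}.

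For the bound \eqref{ee.alph.ineq}, the key observation is that $|\alpha_{siN}|\le 1$ and $|\alpha_{bi}|\le 1$ individually. Indeed, by \eqref{eq3} the entries of $\bh_{s1}$ and $\bh_{si}$ all have unit modulus, so $\alpha_{siN}$ is an average of $N$ unit-modulus complex numbers and the triangle inequality (equivalently, Cauchy--Schwarz together with $|\bh_{si}|^2=N$) gives $|\alpha_{siN}|\le 1$; the identical argument applied to \eqref{eq.hbi.nula} gives $|\alpha_{bi}|\le 1$. Combining this with \eqref{eq11}, $|\alpha_{iN}| = |\alpha_{siN}|\,|\alpha_{bi}| \le \min\{|\alpha_{siN}|,|\alpha_{bi}|\}$, since multiplying a number in $[0,1]$ by a second number in $[0,1]$ cannot exceed either factor.

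There is no genuine obstacle here: the proposition is essentially a bookkeeping consequence of the Kronecker algebra together with the normalization conventions in \eqref{eq.alpha.def}. The only points that warrant a moment's care are (i) getting the conjugation and the ordering right in the mixed-product identity, and (ii) verifying that the overall normalization $N_b N$ factors exactly as $N\times N_b$ so that the two resulting normalized scalars are precisely $\alpha_{siN}$ and $\alpha_{bi}$ rather than rescaled variants. I would present the chain of equalities compactly, flag the two Kronecker identities invoked, and then dispatch the inequality in a single line.
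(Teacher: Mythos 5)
Your proposal is correct and follows essentially the same route as the paper: substitute the Kronecker structure into the definition, apply the conjugate-transpose and mixed-product rules from \cite{Magnus-99} to factor the inner product, and then obtain the bound from $|\alpha_{siN}|,|\alpha_{bi}|\le 1$. The only difference is that you explicitly justify these last bounds (via unit-modulus entries and the triangle/Cauchy--Schwarz inequality), which the paper leaves implicit.
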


Thus, the impact of subarray and block array factors $\alpha_{siN}$, $\alpha_{bi}$  on the overall IUI leakage factor $\alpha_{iN}$ is factorized, which simplifies the analysis considerably. In particular, using \eqref{ee.alph.ineq}, $|\alpha_{iN}|\to 0$ if either $|\alpha_{siN}|\to 0$ or $|\alpha_{bi}|\to 0$. This can be exploited to cancel grating lobe's effect on the FP. To this end, let us consider the asymptotic ("massive") regime where $N \to \infty$ while $N_b$ is fixed (constant), under distinct AoAs.
\begin{prop}
\label{prop.alpha.asymp}
If $N_b$ is fixed and $\theta_1\neq \theta_i$, then the following asymptotic relationship holds for the block-portioned NULA:
\bal
\label{eq12}
\lim_{N\to\infty} |\alpha_{iN}| = \left\{\begin{array}{ll}
|\alpha_{bi}(\phi_k)|, & \exists k\in I_k:\   \theta_i=\phi_k\\
0, & \operatorname{otherwise}.
\end{array}\right.
\eal
\begin{proof}
Using \eqref{ee.alph.ineq},
\bal
\label{eq.lim.alpha.in}
\lim_{N\to\infty}|\alpha_{iN}|= 0\ \mbox{if}\ \theta_i \neq \theta_1\ \&\ \theta_i \neq \phi_k \ \forall k \in I_k
\eal
since, from \eqref{eq6}, $\lim_{N\to\infty}|\alpha_{siN}|= 0$  in this case. On the other hand, if $\theta_i =\phi_k$ for some $k \in I_k$, then $|\alpha_{iN}|= |\alpha_{bi}(\phi_k)|$ since $|\alpha_{siN}|=1$ in this case. Note from \eqref{eq.alpha.def} that
\bal
\alpha_{bi}(\phi_k) &=\frac{1}{N_b} \sum_{n=0}^{N_b-1} e^{j2\pi nD(\sin(\phi_k)-\sin(\theta_1))} \notag\\
\label{eq13}
&=\frac{1}{N_b}\frac{\sin(\pi N_bk\Delta D/d)}{\sin(\pi k\Delta D/d)} e^{j\pi N_bk\Delta D/d}
\eal
where the last equality is from \eqref{eq.phi_k} and $D=(N-1)d + \Delta D$. Thus, $\alpha_{bi}(\phi_k)$ is independent of $N$ and this proves the 1st case in \eqref{eq12}.
\end{proof}
\end{prop}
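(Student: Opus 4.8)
The plan is to build everything on the multiplicative factorization $\alpha_{iN}=\alpha_{siN}\alpha_{bi}$ of Proposition \ref{prop.alpha} together with the ULA dichotomy \eqref{eq6} applied to the \emph{subarray} $\operatorname{ULA}(N,d)$, and then to evaluate the residual block factor $\alpha_{bi}$ explicitly. Accordingly I would split into two cases depending on whether the interfering AoA $\theta_i$ lands on a grating-lobe direction of the subarray. Note first that $d$ is the subarray element spacing and is held fixed as $N\to\infty$, so the subarray grating-lobe set $\{\phi_k\}_{k\in I_k}$ (determined by $d$ and $\theta_1$ through \eqref{eq.phi_k}, \eqref{eq7}, \eqref{eq8}) is stable in $N$; this is what makes the limit meaningful.

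\emph{Case 1: $\theta_i\neq\theta_1$ and $\theta_i\neq\phi_k$ for every $k\in I_k$.} Applying \eqref{eq6} to the subarray gives $\lim_{N\to\infty}|\alpha_{siN}|=0$, and then the bound $|\alpha_{iN}|\le|\alpha_{siN}|$ from \eqref{ee.alph.ineq} forces $\lim_{N\to\infty}|\alpha_{iN}|=0$, which is the second line of \eqref{eq12}. No separate treatment of $\theta_i=\theta_1$ is required, since the hypothesis $\theta_1\neq\theta_i$ of the proposition excludes it; and if there are no grating lobes at all ($I_k=\varnothing$) the condition $\theta_i\neq\phi_k$ is vacuous and the same conclusion applies.

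\emph{Case 2: $\theta_i=\phi_k$ for some $k\in I_k$.} Here I would work from the summation form \eqref{eq5} of $\alpha_{siN}$ rather than its Dirichlet-kernel closed form, since the latter is of indeterminate type $0/0$ at a grating lobe. By \eqref{eq.phi_k}, $\sin\phi_k-\sin\theta_1=k/d$, so the subarray phase step is $\Delta\psi_i=2\pi d\,(k/d)=2\pi k$, every summand $e^{jn\Delta\psi_i}=1$, and hence $\alpha_{siN}=1$ \emph{exactly}, for all $N$. Therefore $\alpha_{iN}=\alpha_{bi}(\phi_k)$ identically, and it only remains to check that $\alpha_{bi}(\phi_k)$ carries no $N$-dependence. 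Expanding from \eqref{eq.alpha.def}, $\alpha_{bi}(\phi_k)=N_b^{-1}\sum_{n=0}^{N_b-1}e^{j2\pi nD(\sin\phi_k-\sin\theta_1)}$; substituting $D=(N-1)d+\Delta D$ and $\sin\phi_k-\sin\theta_1=k/d$ turns the exponent into $2\pi nk(N-1)+2\pi nk\Delta D/d$, whose first term is an integer multiple of $2\pi$ and disappears. The surviving geometric sum $N_b^{-1}\sum_{n=0}^{N_b-1}e^{j2\pi nk\Delta D/d}$ is visibly independent of $N$ and, when summed, yields the closed form \eqref{eq13}; this gives the first line of \eqref{eq12} and completes the proof.

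The only real obstacle is bookkeeping rather than substance: one must recognize that at a grating lobe the subarray factor must be read off from the sum of ones (giving the exact value $1$) instead of from the singular closed form, and one must isolate the part of the block phase $2\pi nD(\sin\phi_k-\sin\theta_1)$ that survives modulo $2\pi$ — it is precisely the cancellation of the $N$-dependent piece $2\pi nk(N-1)$ that leaves $\alpha_{bi}(\phi_k)$ as a fixed quantity depending only on $N_b$, $k$, $\Delta D$, $d$, and thereby turns it into the controllable object targeted by the NULA design of Theorem \ref{thm.NULA}. Everything else is a direct appeal to \eqref{eq6} and \eqref{ee.alph.ineq}.
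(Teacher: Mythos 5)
Your proposal is correct and follows essentially the same route as the paper: the off-grating-lobe case via \eqref{eq6} and the bound \eqref{ee.alph.ineq}, and the on-grating-lobe case via $|\alpha_{siN}|=1$ and the explicit evaluation of $\alpha_{bi}(\phi_k)$ using \eqref{eq.phi_k} and $D=(N-1)d+\Delta D$, with the $N$-dependent phase $2\pi nk(N-1)$ dropping out modulo $2\pi$. Your version merely spells out more explicitly the details the paper compresses (reading $\alpha_{siN}=1$ off the sum rather than the singular closed form, and isolating the surviving geometric sum), so there is nothing substantive to add.
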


From \eqref{eq12}, the FP is guaranteed under distinct AoAs if users do not align with grating lobes (or if GLs do not exist), $\theta_i \neq \phi_k\ \forall\ k\in I_k$. If some users do align, then the following equivalence holds:
\bal
\label{eq12.1}
\lim_{N\to\infty}\alpha_{iN}=0\quad \Leftrightarrow\quad \alpha_{bi}(\phi_k)=0\ \forall k\in I_k
\eal
i.e. grating lobes are canceled and the FP holds under \textit{any} distinct AoAs if $\alpha_{bi}(\phi_k)=0\ \forall k\in I_k$. The latter can be achieved by exploiting the NULA structure and choosing appropriate values of $N_b$ and $\Delta D$ as shown below.

To this end, we need the following concepts from number theory \cite[p. 231]{Rosen-99}:
\begin{itemize}
\item \emph{Greatest common divisor} of two integer $m$ and $n$, $\operatorname{gcd}(m,n)$: the largest positive integer that devides $m$ and $n$ without remainder; e.g. $\operatorname{gcd}(15,12)=3$.
\item \emph{Coprime (relative prime)}: two numbers $n$ and $m$ are coprime if $\operatorname{gcd}(n,m)=1$ (no common divisors); e.g. $\operatorname{gcd}(4,5)=1$, so, $4$ and $5$ are coprime. If $\operatorname{gcd}(n,m)=1$ and $n>1$, then $m/n$ is not integer.
\end{itemize}

The following theorem presents the NULA design to cancel all GLs and to achieve the FP under any distinct AoAs and beam steering subject to $|\theta_1| \le \theta_{max}$ for a given maximum steering angle $\theta_{max} \le \pi/2$, where $\theta_{max} = \pi/2$ corresponds to no constraint on steering.

\begin{thm}
\label{thm.NULA}
In LOS environment, the FP holds asymptotically for the NULA comprised of $N_b$ subarrays $\operatorname{ULA}(N,d)$, as in Fig. \ref{fig2}, with any fixed element spacing $d>0$ and any distinct users' AoAs, $\theta_1 \neq \theta_i$, $|\theta_1| \le \theta_{max}$, if:
\bal
\label{eq.T1}
\operatorname{(a)}\ N_b>\lfloor d(1+\sin\theta_{max}) \rfloor \ \operatorname{and\ (b)}\ \Delta D = pd/N_b,
\eal
where $p$ is a positive integer coprime with $N_b$, i.e. $\operatorname{gcd}(p,N_b)=1$, and there is no backward radiation.
\end{thm}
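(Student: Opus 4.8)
The plan is to reduce the theorem to the single algebraic condition already isolated in \eqref{eq12.1}: asymptotic FP holds for \emph{all} distinct AoAs over the admissible steering range precisely when the block-array leakage vanishes at every grating-lobe direction, i.e. $\alpha_{bi}(\phi_k)=0$ for all $k\in I_k$. This is exactly where Proposition \ref{prop.alpha.asymp} leaves us: once $N_b$ is fixed, under distinct AoAs the only way $\lim_{N\to\infty}|\alpha_{iN}|$ can fail to be zero is for an interferer to sit on some grating lobe $\phi_k$, and then the limit equals $|\alpha_{bi}(\phi_k)|$. Since $I_k$ and the $\phi_k$ depend on $\theta_1$, I will need the vanishing to hold uniformly over $|\theta_1|\le\theta_{max}$. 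Thus the whole proof amounts to showing that (a) and (b) force $\alpha_{bi}(\phi_k)=0$ for every such $k$.

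First I would put $\alpha_{bi}(\phi_k)$ into the cleanest form. Starting from the sum in \eqref{eq13} and using $\sin\phi_k-\sin\theta_1=k/d$ (from \eqref{eq.phi_k}) together with $D=(N-1)d+\Delta D$, the phase of the $n$-th term is $2\pi nD(\sin\phi_k-\sin\theta_1)=2\pi n(N-1)k+2\pi n\Delta D k/d$, whose first part is an integer multiple of $2\pi$; hence $\alpha_{bi}(\phi_k)=\frac1{N_b}\sum_{n=0}^{N_b-1}e^{j2\pi n\Delta D k/d}$. Substituting (b), $\Delta D=pd/N_b$, gives $\alpha_{bi}(\phi_k)=\frac1{N_b}\sum_{n=0}^{N_b-1}e^{j2\pi npk/N_b}$, a geometric sum of $N_b$-th roots of unity, which equals $1$ when $e^{j2\pi pk/N_b}=1$ (i.e. $N_b\mid pk$) and equals $0$ otherwise. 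Equivalently, plugging (b) into the closed form \eqref{eq13} makes the numerator $\sin(\pi kp)=0$; working with the root-of-unity sum just avoids the $0/0$ case cleanly.

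The key step is then purely number-theoretic: show $N_b\nmid pk$ for every $k\in I_k$, so that the sum is $0$. Since $\gcd(p,N_b)=1$, we have $N_b\mid pk$ iff $N_b\mid k$. From the grating-lobe index range \eqref{eq7}--\eqref{eq8}, every $k\in I_k$ satisfies $k\ne0$ and $|k|\le\lfloor d(1+|\sin\theta_1|)\rfloor$; using $|\sin\theta_1|\le\sin\theta_{max}$ on the admissible steering sector this yields $0<|k|\le\lfloor d(1+\sin\theta_{max})\rfloor$, and condition (a) makes the right-hand side strictly less than $N_b$. Hence $0<|k|<N_b$, so $N_b\nmid k$, so $N_b\nmid pk$, and therefore $\alpha_{bi}(\phi_k)=0$ for every $k\in I_k$. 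Invoking \eqref{eq12.1} finishes the argument for any fixed $\theta_1$ with $|\theta_1|\le\theta_{max}$; and since $N_b$, $p$ and $\Delta D$ were chosen without reference to $\theta_1$, the same design works for every beam-steering direction in that sector simultaneously. The standing no-backward-radiation assumption $-\pi/2\le\theta\le\pi/2$ keeps $\{\phi_k\}$ confined to the front half-plane, so no further AoAs need be excluded.

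I do not anticipate a genuinely difficult step. The two places needing care are (i) making the bound $|k|<N_b$ uniform in $\theta_1$ so that a single $N_b$ serves the entire sector $|\theta_1|\le\theta_{max}$ — handled by the monotonicity bound $|\sin\theta_1|\le\sin\theta_{max}$ and condition (a); and (ii) not tripping over the $0/0$ form of \eqref{eq13} in the degenerate case $N_b\mid pk$ — which condition (a) precisely rules out for $k\in I_k$, and which the roots-of-unity formulation bypasses entirely.
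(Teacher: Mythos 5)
Your proposal is correct, and its overall skeleton matches the paper's: reduce via Proposition \ref{prop.alpha.asymp} and \eqref{eq12.1} to the requirement $\alpha_{bi}(\phi_k)=0$ for all $k\in I_k$, then show that conditions (a) and (b) force this, with the uniformity over $|\theta_1|\le\theta_{max}$ handled exactly as in the paper through $|k|\le\lfloor d(1+|\sin\theta_1|)\rfloor\le\lfloor d(1+\sin\theta_{max})\rfloor<N_b$. Where you diverge is in the core algebra/number theory. The paper works from the sine-ratio closed form \eqref{eq13} and splits the requirement into (i) numerator zero, which is what \emph{produces} condition (b) $\Delta D=pd/N_b$, and (ii) denominator nonzero, i.e. $pk/N_b\notin\mathbb{Z}$, which it establishes through a chain of three gcd lemmas built around $p_k=\gcd(N_b,|k|)$. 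You instead evaluate $\alpha_{bi}(\phi_k)$ directly as the geometric sum $\frac{1}{N_b}\sum_{n=0}^{N_b-1}e^{j2\pi npk/N_b}$ of $N_b$-th roots of unity, so the whole question collapses to a single divisibility statement, $N_b\nmid pk$, settled by the standard Euclid-type fact that $\gcd(p,N_b)=1$ makes $N_b\mid pk$ equivalent to $N_b\mid k$, combined with $0<|k|<N_b$ from (a). Your route is more elementary and cleaner in two respects: it bypasses the $0/0$ degeneracy of the closed form entirely (you correctly note the roots-of-unity sum is $1$ if $N_b\mid pk$ and $0$ otherwise), and it replaces the paper's three-lemma gcd argument with one standard fact. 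What the paper's route buys in exchange is motivational: by demanding the numerator vanish it \emph{derives} the form of condition (b) rather than merely verifying its sufficiency, which is all your argument (and all the theorem statement) requires.
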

\begin{proof}
From \eqref{eq12}, the FP holds if $\alpha_{bi}(\phi_k)=0\ \forall k \in I_k$. To ensure this, we use \eqref{eq13} and find a proper $\Delta D$ so that
\bal
\operatorname{(i)}\ \sin(\pi N_bk\Delta D/d)=0\ \&\ \operatorname{(ii)}\ \sin(\pi k\Delta D/d)\neq 0
\eal
for all $k \in I_k$. (i) is equivalent to $N_b\Delta D/d$ being an integer:
\bal
\label{eq15}
N_b\Delta D/d = p\quad \Rightarrow\quad \Delta D = pd/N_b,\quad p=1,2,\cdots
\eal
However, one has to ensure (ii) as well with the following implication:
\bal
\label{eq17}
\sin(\pi k\Delta D/d)\neq 0\ \Leftrightarrow \frac{k\Delta D}{d} =\frac{pk}{N_b}\neq n' \quad  \forall k\in I_k
\eal
where $n'\in\{\pm1,\pm2,\cdots\}$ and the equality follows from \eqref{eq15}. To this end, we show that \eqref{eq.T1} imply $pk/N_b\neq n'\ \forall k\in I_k$. We need the following technical Lemmas.

\begin{lemma} \cite[p. 231, \mbox{fact }5]{Rosen-99}\\
\label{lem2}
If $a$ and $b$ are integers with $\operatorname{gcd}(a,b) = d$, then $\operatorname{gcd}(a/d,b/d) = 1$.
\end{lemma}

\begin{lemma}
\label{lem3}
If $a$ and $b$ are integers with $\operatorname{gcd}(a,b)=1$ and $c$ is a divisor of $a$, i.e. $\operatorname{gcd}(a,c)=c$, then $\operatorname{gcd}(a/c,b)=1$
\end{lemma}

\begin{lemma}\cite[p. 231, \mbox{fact }9]{Rosen-99}\\
\label{lem4}
If $a, b$, and $c$ are integers with $\operatorname{gcd}(a,b) = \operatorname{gcd}(a, c) = 1$, then $\operatorname{gcd}(a,bc) = 1$.
\end{lemma}

Now, assume that \eqref{eq.T1} holds and let $p_k$ be the greatest common divisor of $N_b$ and $|k|$:
\bal
\label{asmp2}\operatorname{gcd}(N_b,|k|)=p_k
\eal
Using \eqref{eq.T1}(a) and $|k|\le\ \lfloor d(1+|\sin\theta_1|) \rfloor$, which follows from \eqref{eq7} and \eqref{eq8}, one obtains
\bal
p_k \le |k| \le \lfloor d(1+|\sin\theta_1|) \rfloor <N_b
\eal
since $|\theta_1| \le \theta_{max} \le \pi/2$, which implies $N_b > |k|\ \forall k\in I_k$ . Using \eqref{asmp2} and Lemma 1, one obtains
\bal
\label{eq18}
\operatorname{gcd}(N_b,|k|)=p_k \quad\Rightarrow\quad \operatorname{gcd}(N_b/p_k,|k|/p_k)=1
\eal
Using $\operatorname{gcd}(N_b,p)=1$ and Lemma \ref{lem3},
\bal
\label{eq19}
 \operatorname{gcd}(N_b,p)=1\quad \Rightarrow\quad \operatorname{gcd}(N_b/p_k,p)=1
\eal
where $p_k$ is a divisor of $N_b$. Next, using \eqref{eq18}, \eqref{eq19}, and Lemma \ref{lem4}, one obtains:
\bal
\label{eq20}
\operatorname{gcd}(N_b/p_k, p|k|/p_k)=1
\eal
which means that $N_b/p_k$ and $p|k|/p_k$ are co-prime numbers and therefore their ratio is not an integer,
\bal
\label{eq21}
\frac{p|k|/p_k}{N_b/p_k}=\frac{p|k|}{N_b}\neq |n'|
\eal
This proves \eqref{eq17} and hence Theorem \ref{thm.NULA}. The no backward radiation condition is needed to eliminate the opposite AoAs $\theta_1$ and $\pi- \theta_1$ that cannot be discriminated by a NULA made of isotropic elements. It is always satisfied if $-\pi/2 \le \theta_1,\ \theta_i \le \pi/2$, as assumed here.
\end{proof}

\section{Discussion and Extensions}

Intuitively, the condition in \eqref{eq.T1}(a) on the number of subarrays $N_b$ is needed to make sure that there are enough nulls in the block array factor (i.e. the array factor of $\operatorname{ULA}(N_b,D)$, where each subarray is replaced by an isotropic element) to cancel all GLs. The condition \eqref{eq.T1}(b) on $\Delta D$ is needed to make sure that those nulls align with GLs and cancel them (if $p$ is not a coprime of $N_b$, then some nulls may not exist).

To illustrate Theorem 1, we apply it to the example in Fig. \ref{grating_lobe} and obtain $N_b>\lfloor 2d \rfloor=1$ so that setting  $N_b=25, p=21$ (coprime with $N_b$), $\Delta D = pd/N_b$ satisfies the conditions of Theorem 1 and, as Fig. \ref{grating_lobe} shows, partially cancels the only grating lobe even with finite $N_bN=100$ while preserving the main beam.

In general, larger $d$ calls for larger number $N_b$ of subarrays, but the same $N_b$ and $p$ can fit various $d$; under the latter condition, $\Delta D$ is proportional to $d$. Notably, while the actual directions and the number of GLs do depend on $\theta_1$, see \eqref{eq.phi_k}, \eqref{eq7}, \eqref{eq8}, the design of Theorem 1 is independent of $\theta_1$, i.e., it is not affected by beam steering and it cancels all GLs for any $\theta_1$. This provides the needed flexibility as it allows for beam steering without the need to change the array geometry. It can also be used to show (using the superposition principle) that all the above results, including Theorem 1, also hold in multipath and blocked-LOS channels, provided that different users have distinct AoAs (where each user is now allowed to have multiple AoAs to represent multiple paths). Since $N_b$ and $p$ in \eqref{eq.T1} are not unique, they can be further optimized to improve the performance for finite $N$.

The above results are straightforward to extend to directional (rather then omnidirectional) array elements, as typical in practice. For example, assume that the element pattern is that of a short dipole, $F_e(\theta)=\cos\theta$ (other weakly-directional elements, e.g. microstrip patches, have similar patterns \cite{Hansen-98}). It is straightforward to see that $\alpha_{iN}' = \alpha_{iN} \cos\theta_1 \cos\theta_i$, where $\alpha_{iN}'$ is the inter-user interference leakage term under directional elements and $\alpha_{iN}$ is that for isotropic elements. Clearly, $\alpha_{iN}' \to 0$ if and only if $\alpha_{iN} \to 0$ unless $\cos\theta_i=0$ (endfire directions only). Hence, while weakly-directional elements are able to somewhat reduce grating lobes, they are not able to eliminate them completely and thus have no impact on the FP (except for the endfire/backplane directions).


\end{document}